\newtheorem{remark}{Remark}
\newtheorem{lemma}{Lemma}
\newtheorem{theorem}{Theorem}
\newtheorem{definition}{Definition}
\begin{document}

\title{Cross-Layer Security for Semantic Communications: Metrics and Optimization}

\author{Lingyi Wang,
Wei Wu,~\IEEEmembership{Member,~IEEE,}
Fuhui Zhou,~\IEEEmembership{Senior Member,~IEEE,}\\
Zhijin Qin,~\IEEEmembership{Senior Member,~IEEE,}
Qihui Wu,~\IEEEmembership{Fellow,~IEEE}

\thanks{Copyright (c) 2025 IEEE. Personal use of this material is permitted. However, permission to use this material for any other purposes must be obtained from the IEEE by sending a request to pubs-permissions@ieee.org.}
\thanks{Lingyi Wang is with the College of Science,
Nanjing University of Posts and Telecommunications, Nanjing, 210003, China.
(e-mail: lingyiwang@njupt.edu.cn).}
\thanks{
Wei Wu is with the College of Communication and Information Engineering,
Nanjing University of Posts and Telecommunications, Nanjing, 210003, China,
and also with the National Mobile Communications Research Laboratory, Southeast University, Nanjing, 210096, China, and the Anhui Province Key Laboratory of Cyberspace Security Situation Awareness and Evaluation, China.
(e-mail: weiwu@njupt.edu.cn).}
\thanks{Fuhui Zhou and Qihui Wu are with the College of Electronic and Information Engineering, 
Nanjing University of Aeronautics and Astronautics, Nanjing, 210000, China.
(e-mail: zhoufuhui@ieee.org, wuqihui2014@sina.com).}
\thanks{Zhijin Qin is with the Department of Electronic Engineering, Tsinghua University, Beijing, 100084, China,
and also with the Beijing National Research Center for Information Science and Technology, Beijing, China, and the State Key Laboratory of Space Network and Communications, Beijing, China.
(e-mail: qinzhijin@tsinghua.edu.cn).}
}

\maketitle
\begin{abstract}
  Different from traditional secure communication that focuses on symbolic protection at the physical layer, 
  semantic secure communication requires further attention to semantic-level task performance at the application layer. 
  There is a research gap on how to comprehensively evaluate and optimize the security performance of semantic communication.
  In order to fill this gap, a unified semantic security metric, the cross-layer semantic secure rate (CLSSR), is defined to estimate cross-layer security requirements at both the physical layer and the application layer. 
  Then, we formulate the maximization problem of the CLSSR with the mixed integer nonlinear programming (MINLP). 
  We propose a hierarchical AI-native semantic secure communication network with a reinforcement learning (RL)-based semantic resource allocation scheme,
  aiming to ensure the cross-layer semantic security (CL-SS). 
  Finally, we prove the convergence of our proposed intelligent resource allocation, 
  and the simulation results demonstrate that our proposed CLSS method outperforms the traditional physical layer semantic security (PL-SS) method in terms of both task reliability and CLSSR.
\end{abstract}

% keywords
\begin{IEEEkeywords}
Semantic secure communication, cross-layer security, intelligent resource allocation, reinforcement learning.
\end{IEEEkeywords}

\IEEEpeerreviewmaketitle

\section{Introduction}
The artificial intelligence (AI)-native wireless network is considered as a promising communication technology 
for the next and even future generations since it is highly reliable, low-latency and actively cognitive \cite{saad2024artificial}.
Different from the Shannon communication framework that focuses only on the effectiveness of bits at the physical layer, 
the AI-native wireless communication framework more focuses on the meaning and utility of information at the application layer,
where semantic communication is a common example \cite{thomas2023neuro,9838470,wang2023adaptive}.
However, due to the broadcast nature of wireless communication, it faces security challenges such as eavesdropping and malicious attacks. 
Notably, AI-native communication presents new security requirements while ushering in new security opportunities \cite{yang2023secure}.

In 1949, Shannon proposed the cryptographic security mechanism from the information-theoretic perspective \cite{shannon1949communication}.
Based on Shannon cryptographic security, the authors in \cite{tung2023deep} and \cite{10328183} investigated semantic encryption design for information protection.
However, encryption schemes often assume that attackers have finite computation, which is being challenged as quantum computing develops.
Wyner \cite{wyner1975wire} further introduced the wire-tap channel based on Shannon confidential communications,
and enabled secure transmission at the physical layer using channel characteristics such as noise, fading, and interference \cite{10459057}.
In \cite{wang2024star}, the channel characteristics were used to enhance the semantic security.
However, eavesdroppers occupying favorable channels still pose challenges to semantic security at the physical layer.
In fact, goal-oriented semantic communications focus on task performance at the application layer, 
thus the application layer security needs further consideration.
The authors in \cite{10123081} and \cite{cheng2024knowledge} respectively utilized physical channel difference and knowledge difference to enhance the task security at the application layer.
However, there is a blank in the unified security metrics to comprehensively estimate semantic secure communications.

Resource allocation schemes can significantly enhance security efficiency, especially with limited resources \cite{9832831}, 
and semantic resource allocation further raises the allocation requirements from the task view.
Moreover, with the development of AI-native wireless networks, 
network intelligence and time-efficiency resource allocation also play an important role \cite{saad2024artificial}.
Recently, reinforcement learning (RL) based semantic resource schemes have received widespread attention 
due to their ability to fulfill the above-mentioned demands \cite{zheng2024energy}.
In \cite{wang2023adaptive}, the authors proposed an RL-enhanced adaptive resource allocation scheme, 
where the agent was able to be aware of different semantic representation requirements.
In \cite{9832831}, the RL agent selectively transmitted semantic blocks based on data importance.
Notably, intelligent semantic resource allocation that is aware of cross-layer security remains unexplored.

In this letter, we first define a unified semantic security metric, which aims to comprehensively estimate cross-layer security requirements.
Then, we propose the hierarchical AI-native semantic secure communication network with an RL-based semantic resource allocation scheme,
which can overcome the formulated cross-layer semantic security rate (CLSSR) maximization problem with the mixed integer nonlinear programming (MINLP).
The convergence of our proposed intelligent resource allocation is proved, and the effectiveness of our cross-layer semantic security (CLSS) design is validated through numerical experiments.

The remainder of this paper is organized as follows. 
Section II presents the semantic communication network. 
In Section III, the physical layer security and the application layer security are jointly considered, 
and the CLSSR maximization problem is formulated.
Section IV presents an intelligent resource allocation scheme.
In Section V, the simulation results are presented.
Finally, Section VI concludes this letter.

\section{Secure Semantic Communication}
\begin{figure}
  \centering
  \includegraphics[width=8.5cm]{./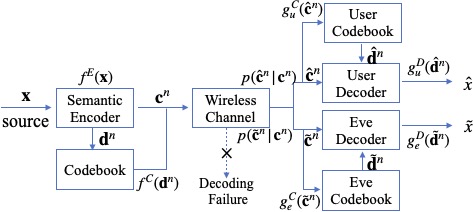}
  \caption{The Secure semantic communication Model.}
\end{figure}
The aim of this letter is to explore the security properties of the goal-oriented semantic communications, 
thus we discuss a generalized semantic communication system, independent of specific tasks and modalities, as shown in Fig. 1.
We first consider a memoryless source $\mathcal{X}$, where the independent and identically distributed (i.i.d.) sequence $\mathbf{x}\in\mathcal{X}$,
and the goal attention is the intrinsic semantics $\mathbf{d}$ in $\mathbf{x}$.
To obtain the semantics, there is a semantic encoder to extract the key information,
represented by $\mathbf{d}^n = f^{E}(\mathbf{x})$, where $\mathbf{d}^n$ is a sequence of $n$-length semantics.
To confuse the Eve with semantics, a user-perceivable learnable semantic noise $\Delta \mathbf{z}^n$ is added, 
where the noise-disturbed semantic is presented by ${\mathbf{d}^{\prime}}^n = \mathbf{d}^n + \Delta \mathbf{z}^n$.
The goal-related semantics are then encoded by the codebook \cite{10622764}, which is a common knowledge base and generates codeword indexes.
Here, we consider a codebook with binary indexes.
Let $\mathcal{C}^n=\left\{1,2,\cdots,M\right\}$ be a set of the  $n$-length codeword index, and $\mathbf{c}^n$ be a sequence of $n$-length index, $\mathbf{c}^n \in \mathcal{C}^n$.
Thus, the semantic code is obtained by $\mathbf{c}^n = f^C({\mathbf{d}^{\prime}}^n)$, and then transmitted over the wireless channel.

At the user, the received semantic code $\hat{\mathbf{c}}^n$ is obtained by the transition probability distribution $p(\hat{\mathbf{c}}^n|\mathbf{c}^n)$ over the user channel,
and we can recover the semantics by using the user codebook, written as $\hat{\mathbf{d}}^n = g_u^C(\hat{\mathbf{c}}^n)$.
The target goal is achieved with the semantic decoder at the user side $\hat{\mathbf{x}} = g_u^D(\hat{\mathbf{d}}^n)$.
At the Eve, the received semantic code $\tilde{\mathbf{c}}^n$ is obtained by the transition probability distribution $p(\tilde{\mathbf{c}}^n|\mathbf{c}^n)$ over the eavesdropping channel,
where both semantics transmission and semantics eavesdropping happens in the licensed spectrum.
The eavesdropped semantics is recovered by $\tilde{\mathbf{d}}^n = g_e^D(\tilde{\mathbf{c}}^n)$, and then the Eve speculates on the target goal $\tilde{\mathbf{x}} = g_e^D(\tilde{\mathbf{d}}^n)$.
The mappings describing the transmitter, receiver, and Eve are respectively written by
\begin{subequations}\label{eq1}
  \begin{align}
  f^C(f^{E}(\mathbf{x})+\Delta \mathbf{z}^n): \mathcal{X} \rightarrow \mathcal{C}^n, \\
  g_u^D(p(\hat{\mathbf{c}}^n|\mathbf{c}^n)):  \mathcal{C}^n \rightarrow  \widehat{\mathcal{X}}, \\
  g_e^D(p(\tilde{\mathbf{c}}^n|\mathbf{c}^n)): {\mathcal{C}}^n  \rightarrow \widetilde{\mathcal{X}}.
  \end{align}
\end{subequations}

The semantic secure communication scenario discussed in this letter is clarified. 
The encryption is not used at the physical and application layers, i.e. the open signal is communicated from the transmitter to the receiver.
The transmitter with multiple antennas transmits beamforming, 
which concentrates the signal in the direction of the legitimate user and reduces the radiation of the signal in other directions, 
thus suppressing the symbol quality of the Eves. 
At the application layer, the learnable semantic noise $\mathbf{z}^n$ is used to confuse the semantics since it is unnoticeable for the Eve. 
Moreover, the Eve can apply the statistical pattern to align the received semantics with the target semantics.
Due to the robustness of the semantics, it is still possible for the Eve to eavesdrop on some useful task information with the decoder $g_e^D$.

\section{Physical Layer Security and Application Layer Security}
In this section, 
we present a detailed analysis of security optimization targets at the physical layer and the application layer for semantic communications.
Then, a cross-layer semantic security metric is given.

\subsection{Physical Layer Security}
At the physical layer, 
the expectation of the user is to accurately receive the codeword $\hat{\mathbf{c}}^n$,
whereas the Eve attempts to eavesdrop on this codeword.
Thus, the security target at the physical layer can be achieved 
when the user successfully receives the accurate codeword while the Eve fails to recover the codeword with the incomplete eavesdropped information.

The security rate is applied to evaluate the security efficiency at the physical layer.
Specifically, the transmit beamforming from the transmitter is represented by $\mathbf{f}_u$.
Let $\mathbf{h}_u$ and $\mathbf{h}_e$ be the channel gain from the transmitter to the user and the channel gain from the transmitter to the Eve, respectively.  
The transmission rate of the user $u$ and the Eve $e$ can be respectively represented by
\begin{subequations}
  \begin{align}
  R_u=B \log _2\left(1+\frac{|\mathbf{h}_u\mathbf{f}_u|^2}{\sigma_u^2}\right),\\
  R_e=B \log _2\left(1+\frac{|\mathbf{h}_e\mathbf{f}_u|^2}{\sigma_e^2}\right),
\end{align}
\end{subequations}
where $\sigma_u^2$ and $\sigma_e^2$ are respectively the noise power at the user and the Eve.
Then, the security rate at the physical layer can be represented by
\begin{equation}
  R^{S}_{u,e}=\left(R_u-R_e\right)^{+},
\end{equation}
where $\left(x\right)^{+}$ means when $x$ is positive, it makes sense.

\subsection{Application Layer Security}
At the application layer, the goal is to complete the semantic task at the legal user with $\hat{\mathbf{d}}^n$, 
whereas preventing completing the task with eavesdropped semantics $\tilde{\mathbf{d}}^n$. 
Different from the symbol protection at the physical layer, the application layer security is strongly related to the task. 
Hence, we define a security metric called \textit{task security} at the application layer.

Firstly, the distortion $d(\mathbf{x}_1,\mathbf{x}_2)$ can access the goal completion degree, 
where $d$ evaluates the difference between $\mathbf{x}_1$ and $\mathbf{x}_2$ at the semantic level.
We apply the goal completion degree to represent the \textit{task reliability}.
Specifically, let the target goal of symbols $\mathbf{x}$ be $\mathbf{x}^{\mathrm{goal}}$.
The normalized semantic distortions related to the goal at the user $u$ and the Eve $e$ are respectively written by 
\begin{subequations}
  \begin{align}
  &G^n_u \triangleq \frac{d\left(\hat{\mathbf{x}}, \mathbf{x}^{\mathrm{goal}}\right)}{d^{\mathrm{max}}}, \\
  &G^n_e \triangleq \frac{d\left(\tilde{\mathbf{x}}, \mathbf{x}^{\mathrm{goal}}\right)}{d^{\mathrm{max}}},
\end{align}
where $d^{\mathrm{max}}$ represents the upper limit of the distortion metric $d(\cdot)$.
\end{subequations}
Then, jointly considering the physical layer security cases, the \textit{task security} can be written as 
\begin{equation}\label{ts}
  G_{u,e}^{S,n}=\left\{
\begin{aligned}
  &1-G^n_u & & \mathcal{H}_s,\\
  &0& & \text{otherwise},
\end{aligned}
\right.
\end{equation}
where $\epsilon$ is the threshold that can ensure task completion,
and $\mathcal{H}_s$ is the task security case.
\begin{definition}
  We define the \textit{task security} case in a logical-symbolic way, which can be represented by $\mathcal{H}_s: (G^n_u < \epsilon) \wedge (R^{S}_{u,e} > 0 \vee R^{S}_{u,e} = 0, G^n_e \ge \epsilon)$.
\end{definition}

Specifically, it is necessary to ensure that semantic tasks can be completed by the user, i.e., $G^n_u \le \epsilon$.
Then, we consider two cases that can ensure task security. 
The first case is the task security at the physical layer, i.e., $R^{S}_{u,e} > 0$, while the second case is the task security at the application layer even with semantic symbol leakage at the physical layer, i.e., $R^{S}_{u,e} = 0, G^n_e > \epsilon$.

\begin{remark}
  The fact that tasks can be accomplished is the basic requirement of security. 
  Then, with physical layer security, the application layer is certainly secure.
  While the physical layer security cannot be guaranteed, 
  the application layer can also guarantee the system security under task constraints.
  Hence, it is clear that \textit{task security} (\ref{ts}) relaxes the security constraint at the physical layer 
  for goal-oriented semantic communications.
  As for the threshold $\epsilon$, it differs in different tasks.
  For example, on a discrete-type task, such as image categorization, $\epsilon = 1$, 
  which means the user can always successfully classify, i.e., $G^n_u=0$, while the Eve fails, i.e., $G^n_e=1$.
  It is more complicated to set $\epsilon$ on a continuous-type task, such as image reconstruction and speech recognition,
  that expert experience can be required to determine $\epsilon$. For example, $\epsilon$ can be set to $0.01$ based on the metric of mean square error loss for the image reconstruction task. Similar to the discrete task, the semantic security case of continuous-type tasks can also be defined by using Definition 1.
\end{remark}

\subsection{Cross-Layer Semantic Security}
To jointly consider the semantic security at the physical layer and the application layer,
we propose a cross-layer semantic security rate (CLSSR) for the first time.
Specifically, we map the semantic security requirements of the application layer into the physical layer.
The cross-layer semantic-bit security efficiency can be defined as
\begin{equation}
  \Phi_{u,e}^{S,n} = \frac{G_{u,e}^{S,n}}{n \times {b}},
\end{equation}
where $b$ represents the number of bits for each codeword index, i.e., the bits required for each semantic symbol.
In fact, $n \times {b}$ is the total bit overhead at the physical layer for the semantic security achieved at the application layer.
Let the semantic symbol efficiency $\frac{1(G^{S,n})}{n}$ be the semantic unit ($sut$),
and then the unit of the $\Phi^{S,n}$ can be written as $sut/bit$.
It is evident that the $\Phi^{S,n}$ is the semantic gain of each bit and serves as the bridge between the physical layer and the application layer.
Hence, the CLSSR is defined by
\begin{equation}\label{clssr}
  \Omega_{u,e}^{S,n} =\left\{
\begin{aligned}
  &\Phi_{u,e}^{S,n} \times {R}_{u,e}^{S} & & R_{u,e}^{\mathrm{S}} > 0, \\
  &\Phi_{u,e}^{S,n} \times {R}_{u} & & R_{u,e}^{\mathrm{S}} = 0,
\end{aligned}
\right.
\end{equation}
where the unit of the $\Omega_{u,e}^{S,n}$ is $sut/s$.
In this way, the CLSSR $\Omega_{u,e}^{S,n}$ can effectively evaluate the security performance through both the physical layer and the application layer,
guiding the cross-layer semantic security design.

\begin{remark}
  Based on the task security case (Definition 1), the task security requirement $G^n_e > \epsilon$ at the application layer relaxes the strict symbol security at the physical layer.
  Hence, it can be observed from (\ref{clssr}) that the transmitter can abandon the physical layer security 
  in order to obtain the higher $\Omega_{u,e}^{S,n}$ since the symbol transmission rate ${R}_{u} \ge R_{u,e}^{\mathrm{S}}$.
  Notably, such rate increase plays an important role in real-time sensitive semantic tasks, thus ensuring \textbf{task timeliness}, which is one of the most important features of future intelligent wireless communication networks.
\end{remark}

Further considering a cooperative eavesdropping scenario with multiple eavesdroppers, 
in which case that the secure and insecure physical layer may occur at the same time,
the CLSSR at the user $u$ is written as 
\begin{equation}
  \Omega_{u}^{S,n} = \min_{e}\Omega_{u,e}^{S,n}.
\end{equation}
Here, (8) clearly demonstrates the necessity of considering the application layer security for semantic communications,
which assesses semantic security in a more comprehensive way. 

\begin{remark}
  In fact, although the length $n$ of the semantics can be any value theoretically, 
  in terms of engineering applications and many practical deep learning-based semantic communication networks, 
  $n$ tends to be a fixed value that is a multiple of two.
  Hence, we can abbreviate the codeword $\mathbf{c}^{n}$, semantic noise $\Delta \mathbf{z}^n$ and the CLSSR $\Omega_u^{S,n}$ respectively as $\mathbf{c}$, $\Delta \mathbf{z}$ and $\Omega_u^{S}$ since the number of semantic symbols in this paper is predefined.
\end{remark}

\subsection{CLSSR Maximization Problem Formulation}
We formulate the CLSSR maximization problem by jointly 
optimizing the transmit beamforming $\mathbf{f}_u$, bits $b$ for each semantic symbol representation in the codebook, and
the semantic decoder $g_u^D$ at the user along with user-perceivable learnable semantic noise $\Delta \mathbf{z}$,
which is given as
\begin{subequations}\label{opt}
  \begin{align}
   \mathbf{P}: &\max_{\mathbf{f}_u, b, g_u^D, \Delta \mathbf{z}} \Omega^{S}_u \\ 
   \text{ s.t. }  
   & \left\|\mathbf{f}_u\right\|^2 \leq \xi,\\
   & b_{\min} \leq b \leq b_{\max},\\
   & G^n_e > G^n_u.
  \end{align}
\end{subequations}
The constraint (\ref{opt}b) represents the power limit, where $\xi$ is the maximum transmit power of the transmitter.
The constraint (\ref{opt}c) presents the range of index bits, where $b_{\min}$ is the low bound, and $b_{\max}$ is the upper bound.
The constraint (\ref{opt}d) gives the minimum security requirement at the application layer, which guides the training of the user decoder $g_u^{C}$ and user-perceivable learnable semantic noise $\Delta \mathbf{z}$.

The allocation of the index bits $b$ can be considered a 0-1 matching problem, 
and the optimization of the user decoder $g_u^D$ and and user-perceivable learnable semantic noise $\Delta \mathbf{z}$ are coupled with the beamforming design $\mathbf{f}_u$ at the physical layer.
Hence, it is evident that (\ref{opt}) is a MINLP, which is proved to be a NP-hard problem.

\section{Proposed Policy Iteration RL-enhanced Intelligent Resource Allocation Scheme}
Due to the problem complexity, the dynamic radio environment, and the intelligence requirement, 
a policy iteration RL-based intelligent resource allocation scheme is proposed.
The Markov decision processing (MDP) is utilized to model the semantic communication network.

\subsection{MDP Model}
Let $\mathcal{A}$ be the action space.
The action includes $\mathbf{f}_u$ and $b$, represented by $a^t = \{\mathbf{f}^t_u,b^t\}$ at the timestamp $t$, $a^t \in \mathcal{A}$.
The state includes the current codewords $\mathbf{c}$, legal channel gain $\mathbf{h}_{u}$ and the illegal channel gain $\mathbf{h}_e$,
represented by $s^t = \{{\mathbf{c}}^t,\mathbf{h}^t_u,\mathbf{h}_e^t\}$ at the timestamp $t$, $s^t \in \mathcal{S}$.
We consider steady-state communications with the stationary source between the transmitter and the user.
Hence, the source transition matrix $\mathrm{P}^S$ is fixed, where the element $p^S(\mathbf{x}_2|\mathbf{x}_1)$ represents the probability of the source $\mathbf{x}_2$ following the source $\mathbf{x}_1$.
Given $f^{C}$, we have $p^S(\mathbf{c}_2|\mathbf{c}_1)=p^S(\mathbf{x}_2|\mathbf{x}_1)$.
As $\mathbf{h}_u$ and $\mathbf{h}_e$ are the outcomes of the action $a$,
the state transition probability follows the source transition matrix. 
Let $\mathrm{P}^T$ be the state transition matrix, where the element $p^T(s^{\prime}|s^t,a^t)=p^S(\mathbf{c}^{t+1}|\mathbf{c}^{t})$ and $\mathrm{P}^T \sim \mathrm{P}^S$.

\begin{remark}
The bit allocation is considered as a discrete matching problem, 
while the transmit beamforming optimization can be also regarded as the limited action choices under the retained accuracy.
Hence, the size of the action space $|\mathcal{A}|$ is finite. Let $\mathcal{S}$ be the state space.
  The state $s$ is coupled with the $a^{t}$ and ${\mathbf{c}}^t$, thus the state space size $|\mathcal{S}|$ is also finite.
\end{remark}

The CLSSR serves as the reward function, represented by $r^t = {\Omega_{u}^{S,t}}$, where ${\Omega_{u}^{S,t}}$ is the achievable CLSSR at the timestamp $t$.
However, the semantic coding is also responsible for the CLSSR, 
and this reward can be severely disrupted by the instability of the machine learning-driven $f^C$, $g_u^D$, $g_e^D$, 
resulting in the failure of the intelligent agent to find an optimal policy.
Here, we make two improvements. The first one is to redesign the reward function $r^t$, which is rewritten as
$r^t = {\Omega_{u}^{S,t}}+\sum_{e}(|\mathbf{h}_u\mathbf{f}_u|^2-|\mathbf{h}_e\mathbf{f}_e|^2)$, which aims to guide the agent to enhance the channel gain at the user while suppressing the channel gain at the Eves.
The second one is to pretrain the semantic coding networks. 
Harmonized training the encoder $f^C$ and the user decoder $g_u^D$, 
we have the pre-trained $f^{*C}$ and $g_u^{*D}$ along with the Eve decoder $g_e^{*D} = g_u^{*D}$, which is the worse eavesdropping case. Based on $g_u^{*D}$, we obtain the noise-enhanced user security decoder $\hat{g}_u^{*D}$ by adding the user-perceivable semantic noise $\Delta \mathbf{z}$,
based on which the whole semantic coding networks are tined with the policy of the intelligent agent.

\subsection{The Convergence of Proposed RL for CLSSR Maximization}
Let the policy be $\pi$,
and the policy value is obtained by
\begin{equation}
  \begin{split}
  v_\pi(s)&=\sum_a \pi(a \mid s) \sum_{s^{\prime}} p(s^{\prime} \mid s, a)(r + \gamma v_\pi\left(s^{\prime}\right)) \\
  &= \sum_{{\mathbf{c}^{\prime}}} p^S({\mathbf{c}^{\prime}}|{\mathbf{c}})(r + \gamma v_\pi\left(s^{\prime}\right)),
\end{split}
\end{equation}
where the $\gamma \in (0,1)$ is the reward discount factor.
The state-action value can be obtained by
\begin{equation}
  q_\pi(s, a)=\sum_{s^{\prime}} p\left(s^{\prime} \mid s, a\right)\left(r+\gamma v_\pi\left(s^{\prime}\right)\right).
\end{equation}

\begin{lemma}
  The policy $\pi$ is iteratively improved.
\end{lemma}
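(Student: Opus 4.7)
The plan is to invoke the classical policy improvement theorem adapted to the finite MDP constructed in Section IV. Given a policy $\pi$ with state value $v_\pi$ and state-action value $q_\pi$ defined by (10) and (11), I would first construct a candidate improved policy $\pi'$ greedily with respect to $q_\pi$, namely $\pi'(s) \in \arg\max_{a \in \mathcal{A}} q_\pi(s,a)$. Because $v_\pi(s) = \sum_{a} \pi(a\mid s) q_\pi(s,a)$ is a convex combination of the values $\{q_\pi(s,a)\}_{a\in\mathcal{A}}$, the greedy choice immediately yields the one-step inequality $q_\pi(s,\pi'(s)) \ge v_\pi(s)$ for every $s \in \mathcal{S}$.

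Next, I would bootstrap this one-step bound into a multi-step bound on $v_{\pi'}$. Expanding $q_\pi(s,\pi'(s)) = \sum_{s'} p(s'\mid s,\pi'(s))\bigl(r + \gamma v_\pi(s')\bigr)$ and reapplying $v_\pi(s') \le q_\pi(s',\pi'(s'))$ at the successor state, after $k$ rollouts under $\pi'$ I obtain
\begin{equation}
v_\pi(s) \le \mathbb{E}_{\pi'}\!\left[\sum_{i=0}^{k-1} \gamma^i r^{t+i} + \gamma^k v_\pi(s^{t+k}) \,\Big|\, s^t = s\right].
\end{equation}
Because $\gamma \in (0,1)$ and the modified reward $r^t = \Omega_u^{S,t} + \sum_e (|\mathbf{h}_u\mathbf{f}_u|^2 - |\mathbf{h}_e\mathbf{f}_e|^2)$ is bounded on the finite state/action spaces described in Remark 4, letting $k \to \infty$ gives $v_\pi(s) \le v_{\pi'}(s)$ for all $s$, which is the desired monotone improvement.

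Finally, I would close the argument by noting that the set of deterministic policies has cardinality at most $|\mathcal{A}|^{|\mathcal{S}|}$, which is finite by Remark 4. Combined with the monotone non-decrease just established, the sequence $\{v_{\pi_k}\}$ generated by successive greedy updates must stabilize in finitely many iterations; at stabilization, $\pi_{k+1} = \pi_k$ and $q_{\pi_k}(s,\pi_k(s)) = \max_a q_{\pi_k}(s,a)$, so the Bellman optimality equation holds and $\pi_k$ is optimal.

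The main obstacle I anticipate is justifying that the Bellman expectation operator is a genuine contraction in the present setting: the coding mappings $f^C, g_u^D, g_e^D$ are learned objects, and naively the reward would drift as they are retrained. The proof therefore has to lean on the pre-training step described just before the lemma, which freezes $f^{*C}$, $g_u^{*D}$, $g_e^{*D}$, and the noise-enhanced $\hat{g}_u^{*D}$, rendering the reward stationary with respect to the agent's policy. Once this stationarity is explicitly invoked, the finite-MDP policy iteration argument goes through essentially verbatim.
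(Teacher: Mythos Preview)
Your argument is correct and mirrors the paper's own proof: both invoke the greedy improvement $\pi'(s)=\arg\max_a q_\pi(s,a)$, establish $v_\pi(s)\le q_\pi(s,\pi'(s))$, and unroll the Bellman recursion to obtain $v_\pi(s)\le v_{\pi'}(s)$. Your last two paragraphs (finite-policy termination and stationarity of the pretrained coders) go beyond what Lemma~1 itself requires and overlap with the paper's separate Theorem~1, but they are sound additions rather than deviations.
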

\begin{proof}
  We have the finite $|\mathcal{A}|$ and $|\mathcal{S}|$ as stated in Remark. 2.
Given a state $s$, the updated policy satisfies
$\pi^{\prime}(s)=\arg \max _a q_\pi(s, a)$. Hence, we have
\begin{equation}
  \begin{split}
  v_\pi(s) & \leq q_\pi\left(s, \pi^{\prime}(s)\right)\\
  & = \sum_{s^{\prime}} p\left(s^{\prime} \mid s, \pi^{\prime}(s)\right)\left(r+\gamma v_{\pi}\left(s^{\prime}\right)\right) \\
  & = \mathbb{E} \left[ r+\gamma v_{\pi}\left(s^{\prime}\right) \mid s, \pi^{\prime}(s)\right] \\
  & \leq \mathbb{E} \left[ r+\gamma q_\pi\left(s^{\prime}, \pi^{\prime}(s^{\prime})\right) \mid s, \pi^{\prime}(s)\right]\\
  & = \mathbb{E} \left[ r+\gamma \mathbb{E}\left[r^{\prime}+\gamma v_\pi(s^{\prime\prime})\mid s^{\prime}, \pi^{\prime}(s^{\prime})\right] \mid s, \pi^{\prime}(s)\right]\\
  & = \mathbb{E} \left[ r+\gamma r^{\prime} + \gamma^2 v_\pi(s^{\prime\prime}) \mid s, \pi^{\prime}(s)\right]\\
  & \leq \mathbb{E} \left[ r+\gamma r^{\prime} + \gamma^2 r^{\prime\prime} + \gamma^3 v_\pi(s^{\prime\prime\prime}) \mid s, \pi^{\prime}(s)\right]\\
  & \cdots \\
  & = v_{\pi^{\prime}}(s)
\end{split}
\end{equation}
The alternating iterations of $v_\pi$ and $q_\pi$ are omitted, and $v_\pi(s) \leq v_{\pi^{\prime}}(s)$ proves the Lemma 1.
\end{proof}

Given the state $s$ and policy $q_\pi$, the action is obtained by
\begin{equation}
  \mathrm{a}=\underset{a}{\arg\max} q_\pi(s, a).
\end{equation} 

\begin{theorem}
  The optimal policy $\pi^*$ can be iteratively found.
\end{theorem}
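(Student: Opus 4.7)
The plan is to combine the monotonic improvement guarantee from Lemma~1 with the finiteness of the policy space noted in Remark~2, and then verify that the limiting policy satisfies the Bellman optimality equation. First I would observe that because $|\mathcal{S}|$ and $|\mathcal{A}|$ are finite, the set of deterministic policies $\pi:\mathcal{S}\to\mathcal{A}$ has cardinality at most $|\mathcal{A}|^{|\mathcal{S}|}$, which is finite. Moreover, since the instantaneous reward $r^t$ (the modified CLSSR together with the channel-gain term) is bounded and $\gamma\in(0,1)$, each value function $v_\pi$ is uniformly bounded by $r_{\max}/(1-\gamma)$.

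Second, Lemma~1 supplies the inequality $v_\pi(s)\le v_{\pi'}(s)$ for every $s\in\mathcal{S}$ whenever the successor policy is obtained by the greedy update $\pi'(s)=\arg\max_{a} q_\pi(s,a)$ used in (13). Generating the sequence $\{\pi_k\}$ by repeatedly applying this update therefore produces a sequence of value functions $v_{\pi_0}\le v_{\pi_1}\le v_{\pi_2}\le\cdots$ that is pointwise nondecreasing and pointwise bounded. Because only finitely many distinct policies exist, and because distinct visited $v_{\pi_k}$ would force the sequence to be strictly increasing on at least one state, the iteration must stabilize after finitely many steps at some $\pi^\star$ with $v_{\pi^\star}=v_{\pi_{k}}$ for all sufficiently large $k$.

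Third, I would argue that stabilization implies optimality. If the greedy update leaves $\pi^\star$ unchanged, then $\pi^\star(s)\in\arg\max_a q_{\pi^\star}(s,a)$ for every $s$, which gives
\begin{equation}
v_{\pi^\star}(s)=\max_{a}\sum_{s'}p(s'\mid s,a)\bigl(r+\gamma v_{\pi^\star}(s')\bigr).
\end{equation}
This is exactly the Bellman optimality equation. Since the Bellman optimality operator is a $\gamma$-contraction on the finite-dimensional space $\mathbb{R}^{|\mathcal{S}|}$, the Banach fixed-point theorem guarantees a unique solution $v^\star$, and any policy whose value function satisfies the equation is optimal. Hence $v_{\pi^\star}=v^\star$ and $\pi^\star$ attains the maximum CLSSR.

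The step I expect to be the main obstacle is the last one: justifying that the fixed point of the greedy update is a \emph{global} optimum rather than merely a local one. Monotonic improvement alone does not exclude convergence to a suboptimal plateau, so the contraction argument (or an equivalent direct comparison showing that any $\pi$ satisfying the Bellman optimality equation dominates every competitor) is essential. Once this uniqueness is invoked, combining it with the finite-step stabilization from Lemma~1 and Remark~2 completes the proof.
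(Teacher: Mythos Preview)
Your proposal is correct and follows essentially the same route as the paper: both arguments combine the finiteness of $|\mathcal{S}|$ and $|\mathcal{A}|$ from Remark~2, the monotone-improvement guarantee of Lemma~1, and the contraction property of the Bellman optimality operator to conclude that the iteration terminates at the unique optimal policy. Your write-up is in fact more explicit than the paper's on the point you flag as the main obstacle---the paper simply asserts existence and uniqueness of $\pi^*$ via the contraction mapping theorem and then claims the monotone sequence must hit it, whereas you spell out that stabilization forces the Bellman optimality equation to hold and only then invoke uniqueness.
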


\begin{proof}
Considering the $|\mathcal{A}|$ and $|\mathcal{S}|$ are finite,
our considered MDP model follows the Bellman's optimality equation \cite{sutton2018reinforcement} with the Contracting mapping theorem.
Thus, there exists one and only one optimal policy $\pi^*$.
As the policy space is finite, let the policy set of multiple time steps be $\phi = \{\pi_1,\pi_2,\cdots,\pi_t,\cdots,\pi_T\}$.
Since the improvement nature of the policy $\pi$ is proved in Lemma 1, 
there exists $m$, satisfying $\pi_m = \pi^*$ and $\pi_t = \pi^*, \forall t \ge m$.
The convergence is proved.
\end{proof}

\section{Simulation And Analyses}
In this section, 
several numerical experiments are implemented to validate the necessity and effectiveness of the CLSS design.
If not stated additionally, the simulation parameters are as follows.
The base station and user locations are fixed, respectively at (0, 0) and (500, 500), and two Eves are respectively located at (400, 450) and (800, 300).
The transmit power is set as $TP = 20 \mathrm{~dBm}$, the bandwidth is $B = 1 \mathrm{~MHz}$, and the Rayleigh fading is considered.
The noise power density is set as $N_0 = -174 \mathrm{~dBm/Hz}$, 
and the noise factor $\tau$ is used to explore different nature noise power $P_0$, i.e., 
the ground noise power $P$ satisfies $P = \tau P_0$
Without loss of generality, we adopt the semantic coding framework in \cite{10123081} and the adaptive codewords in \cite{wang2023adaptive}.
Specifically, the image modality and reconstruction task are considered, where the perceptual loss is used as $d(\cdot)$.
We consider two typical antenna configurations, multiple input multiple output (MIMO) and multiple input single output (MISO),
where $2 \times 2$ and $2 \times 1$ Alamouti space-time diversity techniques are applied.
The modulation method is set to $4\mathrm{-PSK}$.
For better comparison, the traditional PL-SS without AL-SS design (\ref{eq1}a) is introduced.

To put further emphasis on the advantage of our designed CL-SS method, 
we define a new metric called \textit{task reliability} that quantifies \textit{task security} and \textit{task timeliness} that are achieved at the same time.
Specifically, \textit{task reliability} is mathematically defined as the percentage of the number of tasks $\text{T}_{\text{security } \cap \text{ timeliness}}$ that achieve both security and timeliness to the total number of performed tasks ${\text{T}_\text{total}}$, which is represented by
\begin{equation}
    \text{Task Reliability} = \frac{\text{T}_{\text{security } \cap \text{ timeliness}}}{\text{T}_\text{total}} \times 100 \%.
\end{equation}
The task reliability convergence of our proposed CL-SS and the conventional PL-SS versus the training episodes with $\tau=2$ is presented in Fig. 2.
It demonstrates that our proposed CL-SS method can satisfy the reliability requirements. i.e., achieving 100\% task reliability, with a convergence speed 300\% faster than that of the conventional PL-SS method.
This is due to the fact that in the CL-SS method, the application layer can ensure the system security with the model differences and semantic noise design at a higher symbol rate $R_u$, 
thus relaxing the physical layer security, as analyzed in (\ref{clssr}).
In contrast, the traditional PL-SS method relies only on the physical layer optimization through beamforming, 
which simultaneously requires inhibition of eavesdroppers from obtaining useful information and ensures the task timeliness.
It is evident that the overhead of employing a pre-trained semantic secure coding model for the application layer
is significantly less than that of resource optimization for the physical layer, such as multiple-antennas design.
Therefore, it is a promising solution to develop novel security techniques at the application layer and cross-layer security techniques for semantic communications.

\begin{figure}
  \centering
  \includegraphics[width=8.5cm]{./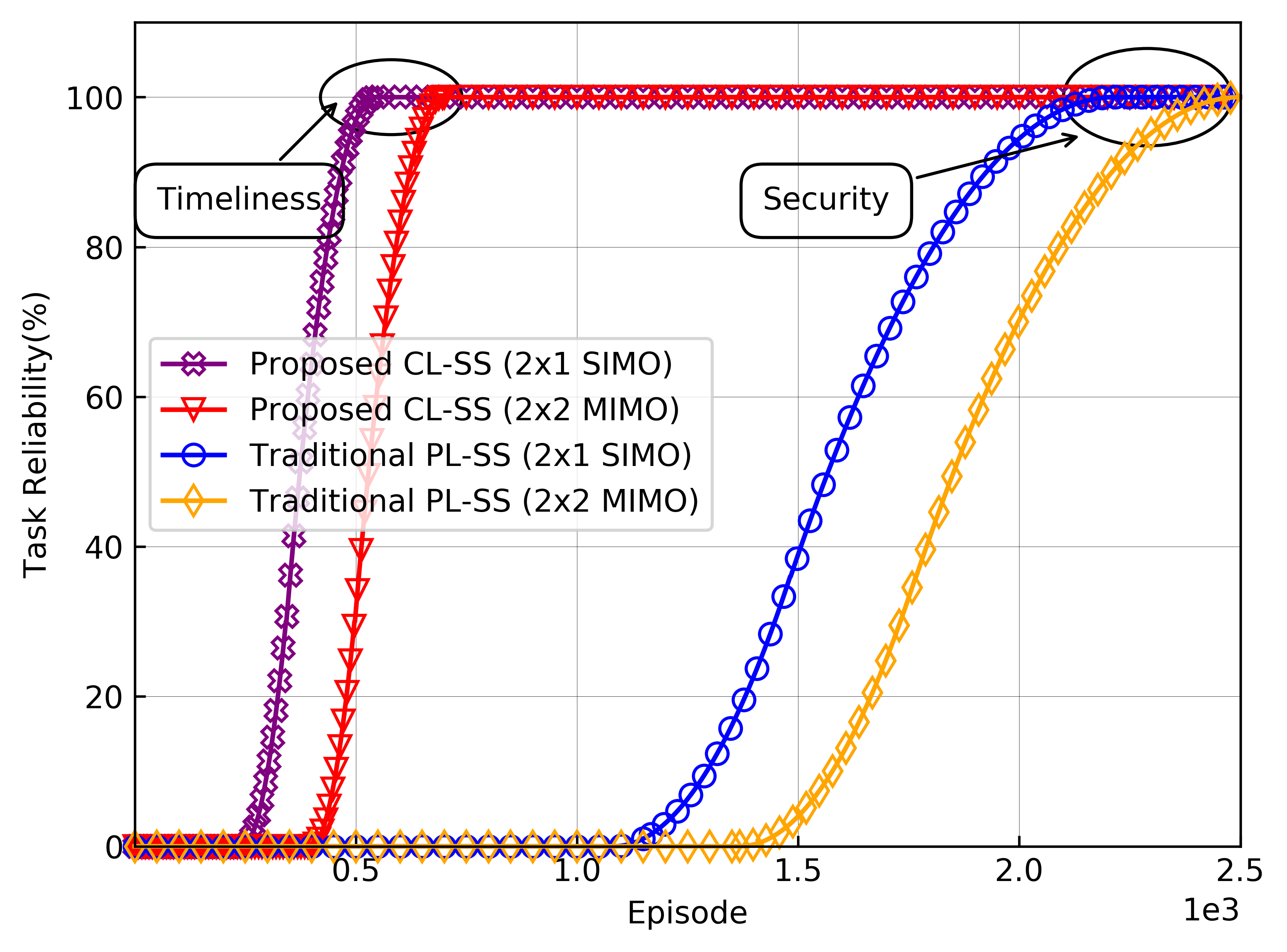}
  \caption{The reliability convergence of semantic security models.}
\end{figure}

\begin{figure}
  \centering
  \includegraphics[width=8.5cm]{./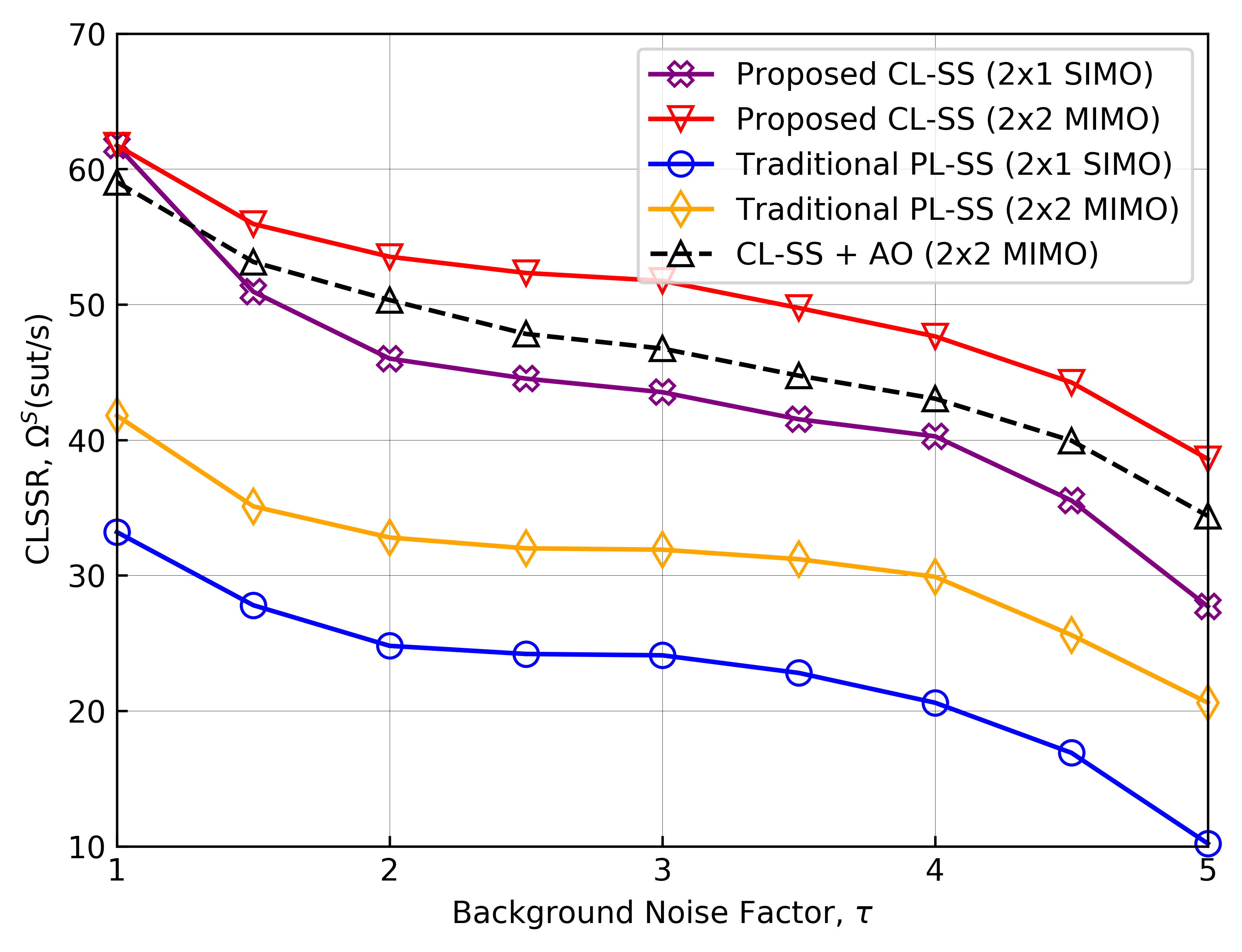}
  \caption{The achievable CLSSR of semantic secure communication models.}
\end{figure}

Fig. 3 demonstrates the achievable CLSSR of our proposed CL-SS and the conventional PL-SS versus background noise.
We can observe that our proposed CL-SS method with MIMO can achieve up to 85\% improvement when $\tau=1$ and 295\% improvement when $\tau=5$
compared to the traditional PL-SS method with MIMO in terms of CLSSR.
This observation shows the superior performance of our proposed CCSSR since the CL-SS method
can find optimal solutions in prioritizing the physical layer security or relaxing the physical layer security into the application layer security.
Since the beamforming design can enhance the signal quality in a particular direction with the task eavesdropping suppressed at the Eves, 
a flat performance degradation can be seen when $\tau$ is from 2 to 3. 
This implies that there is a task gain of the beamforming design for semantic communication when the background noise is strong.
Moreover, the alternating optimization (AO)-based conventional scheme is introduced for better comparison. Due to the synergistic effect of the intelligence-native RL scheme with the AI-driven semantic coding, 
our proposed RL-based optimization scheme performs better not only in CLSSR but also in AI ecosystem support for practical applications.

\section{Conclusion}
This paper proposed to consider the security of semantic communications from a joint physical layer and application layer perspective for the first time.
We first defined the unified semantic security metric, CLSSR, to comprehensively estimate the cross-layer security requirements. Then we formulated the CLSSR maximization problem by jointly optimizing the transmit beamforming, bits for each semantic symbol representation, the semantic decoder at the user, and the user-perceivable learnable semantic noise.
To address the CLSSR maximization problem with MINLP,
we proposed a policy iteration RL-based semantic resource allocation scheme.
The convergence of our proposed intelligent resource allocation was proved, 
and our proposed CLSS method outperformed the traditional PL-SS in terms of the task reliability and CLSSR.

\bibliography{IEEEabrv,reference}

\end{document}